\documentclass[12pt, reqno]{amsart}
\usepackage{amsmath,amsthm,amssymb}

\usepackage[backend=bibtex, style=alphabetic, maxbibnames=99]{biblatex}
\DeclareFieldFormat{postnote}{\mknormrange{#1}}
\DeclareFieldFormat{volcitepages}{\mknormrange{#1}}
\DeclareFieldFormat{multipostnote}{\mknormrange{#1}}

\addbibresource{ref.bib}

\usepackage[hidelinks]{hyperref}

\newcounter{relctr} 
\everydisplay\expandafter{\the\everydisplay\setcounter{relctr}{0}}

\newcommand\labelrel[2]{
		\begingroup
		\refstepcounter{relctr}
		\stackrel{\textnormal{(\alph{relctr})}}{\mathstrut{#1}}
		\originallabel{#2}
		\endgroup
	}
	\AtBeginDocument{\let\originallabel\label} 
	\DeclareMathOperator*{\argmax}{arg\,max}
	\DeclareMathOperator\Fix{Fix}
	\def\R{\mathbb{R}}
	
	\newtheorem{thm}{Theorem}[section]
	\newtheorem*{thm*}{Theorem}
	\newtheorem{cor}[thm]{Corollary}
	\newtheorem{lm}[thm]{Lemma}
	
	\newtheorem{ft}[thm]{Fact}
	
	\theoremstyle{definition}
	\newtheorem{df}[thm]{Definition}
	\newtheorem{eg}[thm]{Example}
	\theoremstyle{remark}
	\newtheorem{rk}[thm]{Remark}
	
	\usepackage{orcidlink}

	\title{Generalization of Zhou fixed point theorem}
	\date{\today}
	\author{Lu YU\,\orcidlink{0000-0001-6154-4229}}
	\address{Université Paris 1 Panthéon-Sorbonne, UMR 8074, Centre d'Economie de la Sorbonne, Paris, France}\email{yulumaths@gmail.com}

	\keywords{Supermodular game, Lattice, Nash equilibrium, Tarski's fixed point theorem}
	\pagestyle{plain}
	\begin{document}
		
		\begin{abstract}
			We give two generalizations of  the Zhou fixed point theorem.  They weaken the subcompleteness condition of values, and relax the ascending condition of the correspondence. As an application, we derive a generalization of Topkis's theorem on the existence and order structure of the set of Nash equilibria of supermodular games.
		\end{abstract}
		\maketitle

\section{Introduction}
Fixed point theory serves as a bridge across various domains of mathematics and applied sciences, by proving the existence and  uniqueness of solutions in numerous mathematical problems. It has different branches of fixed point theorems that are suitable for distinct types of problems. 	Tarski's fixed point theorem \cite[Theorem 1]{tarski1955lattice} shows that, on a nonempty complete lattice in the sense of Definition \ref{df:lattice} (\ref{it:cplt}), the set of fixed points of an increasing self-map is again a nonempty complete lattice. Under topological semicontinuity, Topkis \cite[Thm.~3.1]{topkis1979equilibrium} applies this to establish the existence of the least and the largest Nash equilibria for supermodular games, and provides two algorithms for seeking a Nash equilibria in a supermodular game in \cite[Sec.~4]{topkis1979equilibrium}. 

Ascendingness of correspondences recalled in Definition \ref{df:inc} (\ref{it:Zasc}) is a generalization of increasingness of single-valued maps.  Zhou  generalizes Tarski's fixed point theorem from increasing maps to ascending correspondences. 

For a correspondence $F:S\to 2^S$ on a set $S$, denote the set of fixed points of $F$ by $\Fix(F)=\{s\in S|s\in F(s)\}$. The notion of subcomplete sublattices is recalled in  Definition \ref{df:lattice} (\ref{it:supcplt}).
\begin{ft}\textup{\cite[Theorem 1]{zhou1994set}}\label{ft:Zhou}
	Let $S$ be a nonempty complete lattice. Let $F:S\to 2^S$ be an ascending correspondence such that for every $s\in S$, the value $F(s)$ is a nonempty subcomplete sublattice of $S$. Then $\Fix(F)$ is a nonempty complete lattice.
\end{ft}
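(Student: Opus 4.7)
The plan is to bootstrap Tarski's fixed point theorem via a single-valued selection of $F$. Define $f : S \to S$ by $f(s) := \sup F(s)$; because $F(s)$ is a nonempty subcomplete sublattice of $S$, this supremum exists and itself lies in $F(s)$. Ascendingness of $F$ makes $f$ order-preserving: if $s \leq t$, then for every $x \in F(s)$ one has $x \vee f(t) \in F(t)$, so $x \leq x \vee f(t) \leq \sup F(t) = f(t)$, and passing to the supremum over $x \in F(s)$ yields $f(s) \leq f(t)$. Tarski's theorem then provides $s^* = f(s^*) \in F(s^*)$, so $\Fix(F) \neq \emptyset$. Moreover, the greatest fixed point of $f$ is the greatest element of $\Fix(F)$, because every $s \in \Fix(F)$ satisfies $s \leq \sup F(s) = f(s)$ and Tarski identifies the greatest fixed point of $f$ with the greatest such post-fixed point; symmetrically, the least fixed point of $g(s) := \inf F(s)$ is the least element of $\Fix(F)$.

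To upgrade to completeness, I would take an arbitrary $T \subseteq \Fix(F)$ and construct its supremum in $\Fix(F)$ by replaying the previous step on a restricted correspondence. Set $\sigma := \sup_S T$ and $S' := \{s \in S : s \geq \sigma\}$, a nonempty complete lattice, and define $F' : S' \to 2^{S'}$ by
\[
F'(s) := \{y \in F(s) : y \geq \sigma\}.
\]
The technical heart is the nonemptiness of $F'(s)$: fix any $y_0 \in F(s)$; for each $t \in T$ one has $t \in F(t)$ with $t \leq \sigma \leq s$, so ascendingness yields $t \vee y_0 \in F(s)$. Subcompleteness of $F(s)$ then lets one form the supremum of the family $\{t \vee y_0 : t \in T\} \subseteq F(s)$ inside $F(s)$; this supremum equals $\sigma \vee y_0$, which witnesses $F'(s) \neq \emptyset$. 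Routine verifications show that $F'(s)$ is a subcomplete sublattice of $S'$ and that $F'$ is ascending on $S'$. Applying the first step to $F'$ on $S'$ produces a least element of $\Fix(F')$; since $\Fix(F') = \Fix(F) \cap S'$, this least element is precisely the supremum of $T$ in $\Fix(F)$. A dual construction in $\{s \in S : s \leq \inf_S T\}$, with $F''(s) := \{y \in F(s) : y \leq \inf_S T\}$, yields the infimum of $T$ in $\Fix(F)$.

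The main obstacle is the nonemptiness of $F'(s)$, where ascendingness and subcompleteness must act in tandem: ascendingness raises a chosen element of $F(s)$ above each individual $t \in T$, and subcompleteness then amalgamates the resulting, possibly infinite, family back inside $F(s)$. Once this is established, the remainder is a mechanical bootstrap, as $F'$ inherits the full hypotheses of the theorem on the smaller complete lattice $S'$.
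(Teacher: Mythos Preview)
Your argument is correct. The paper does not prove Fact~\ref{ft:Zhou} directly but derives it from the more general Theorems~\ref{thm:myZhou} and~\ref{thm:cpltval}, and the overall architecture there is the same as yours: establish a least fixed point, then for an arbitrary $T\subset\Fix(F)$ pass to the interval $S'=[\sup_S T,\infty)$, show the restricted correspondence $F'(s)=F(s)\cap S'$ again satisfies the hypotheses, and take the least fixed point of $F'$ as $\sup_{\Fix(F)}T$.

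Where you genuinely diverge is in the base step. You reduce to Tarski's theorem via the single-valued selections $f(s)=\sup F(s)$ and $g(s)=\inf F(s)$, which exist and lie in $F(s)$ precisely because $F(s)$ is subcomplete; monotonicity of $f,g$ then follows in one line from ascendingness. The paper instead proves the least fixed point directly (Lemma~\ref{lm:Zhou}) by analysing the set $C=\{c:\exists x\in F(c),\,x\le c\}$ and invoking Veinott's lemma (Lemma~\ref{lm:Veinott}); likewise, nonemptiness of $F'(s)$ is obtained there through Veinott's lemma rather than your clean computation $\sup_S\{t\vee y_0:t\in T\}=\sigma\vee y_0\in F(s)$. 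Your route is shorter and avoids transfinite induction, but it leans essentially on full subcompleteness of each $F(s)$: if the values are only chain-subcomplete (Theorem~\ref{thm:myZhou}) or merely complete as lattices in their own right (Theorem~\ref{thm:cpltval}), the selections $\sup F(s),\inf F(s)$ need not lie in $F(s)$ and your amalgamation of $\{t\vee y_0\}_{t\in T}$ need not stay inside $F(s)$. The paper's longer path is chosen exactly so that it survives these weakenings. One minor redundancy: once you have a least element and join-completeness of $\Fix(F)$, completeness follows immediately (Lemma~\ref{lm:jointmin}), so the dual construction for infima is not needed.
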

Using Fact \ref{ft:Zhou}, Zhou   \cite[Theorem 2]{zhou1994set}	proves that the set of Nash equilibria of a supermodular game is a nonempty complete lattice.	 Thereby, \cite[Theorem 2]{zhou1994set} generalizes Topkis's theorem. Topkis's and Zhou's theorems  involve topological semicontinuity.   Milgrom and Roberts \cite[Thm.~5]{milgrom1990rationalizability} give a purely order-theoretic existence result of largest and smallest Nash equilibria for supermodular games.  It involves order upper semicontinuity instead, a notion recalled in Definition \ref{df:halforderusc}. Echenique \cite{echenique2005short} gives another constructive proof of Tarski's fixed point theorem and of Fact \ref{ft:Zhou}. Built on this  constructive proof,  \cite[Result 3]{karagozouglu2024submodularity} shows that the Nash equilibria of a supermodular game can be constructed. 

However, the subcompleteness in Fact \ref{ft:Zhou} can be restrictive in applications. As Sabarwal \cite[p.3]{sabarwal2023general} remarks, widely used results in the literature only imply that the set of maximizers of a payoff function in a general model of normal form games is a complete sublattice, which may not be subcomplete. Calciano \cite[Theorem 13]{calciano2010theory} generalizes Fact \ref{ft:Zhou} to correspondences whose values are not necessarily subcomplete. Instead, Calciano requires the intersections with closed intervals to have least elements. A similar hypothesis is used in Sabarwal's generalization \cite[Theorem 4]{sabarwal2023general} of Fact \ref{ft:Zhou}. Furthermore, Calciano and Sabarwal also weaken the ascendingness assumption on the correspondence.    
\subsection*{Main results}
We  relax the subcompleteness restriction in Fact \ref{ft:Zhou}. Theorem \ref{thm:myZhou} is a generalization of Zhou's fixed point theorem to correspondences whose values may not be sublattices, and the subcompleteness hypothesis is relaxed to chain-subcompleteness in the sense of Definition \ref{df:lattice} (\ref{it:chainsubcplt}). These generalizations allows more applications in game theory. The concept of V-ascending correspondence are defined in Definition \ref{df:inc} (\ref{it:Vasc}).
\begin{thm}\label{thm:myZhou}
	Let $S$ be a nonempty complete lattice. Let $F:S\to 2^S$ be a  V-ascending correspondence. Suppose that for every $x\in S$, the subset $F(x)$ is  nonempty and chain-subcomplete in $S$.  Then $\Fix(F)$ is a nonempty complete lattice.
\end{thm}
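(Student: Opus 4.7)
The plan is to adapt Zhou's original strategy for Fact~\ref{ft:Zhou}, replacing the use of subcompleteness of each value $F(x)$ (which supplied a greatest or least element) by Zorn-type arguments enabled by chain-subcompleteness and V-ascendingness.

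First, for the existence of at least one fixed point, I would consider the upset
\[
A = \{x \in S : \exists\, y \in F(x) \text{ with } y \geq x\}.
\]
The bottom element $\bot$ of $S$ lies in $A$ since $F(\bot)$ is nonempty. The crucial step is showing $A$ is chain-closed: for a chain $C \subseteq A$ with supremum $c = \sup C$, I would select for each $x \in C$ a witness $y_x \in F(x)$ with $y_x \geq x$, then use V-ascendingness to transport these witnesses into a chain inside $F(c)$, and chain-subcompleteness of $F(c)$ to form a chain supremum $z \in F(c)$ with $z \geq c$, placing $c$ in $A$. Zorn's lemma then yields a maximal element $x^* \in A$; applying V-ascendingness once more to the comparison $x^* \leq y$, where $y \in F(x^*)$ is the chosen witness, forces $y \in A$, so $y = x^*$ by maximality and $x^* \in \Fix(F)$.

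For the complete lattice structure, it suffices to show that every nonempty $T \subseteq \Fix(F)$ has a join in $\Fix(F)$, the meet case being symmetric. Setting $s = \sup_S T$, I would restrict the problem to the complete sublattice $[s, \top]$. I would then verify that $G(x) = F(x) \cap [s, \top]$ satisfies the hypotheses of the theorem on $[s, \top]$: nonemptiness of each $G(x)$ comes from applying V-ascendingness to each $t \in T$ (noting $t \leq s \leq x$) to produce elements of $F(x)$ above every $t$; chain-subcompleteness of $G(x)$ is inherited from that of $F(x)$ because $[s, \top]$ is chain-closed in $S$. Applying the first step to $G$ on $[s, \top]$ supplies a fixed point of $F$ in $[s, \top]$, and a descent argument dual to the first step produces the least such fixed point, which is the required join of $T$ in $\Fix(F)$.

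The main obstacle lies in Step~1, in the construction of a chain inside $F(c)$ from witnesses $y_x \in F(x)$ indexed by a chain $x \in C$. Because chain-subcompleteness only permits taking suprema of chains, not of arbitrary subsets, the V-ascending hypothesis must yield witnesses that actually fit together into a chain rather than an arbitrary family. A second subtlety appears in the closing descent argument of Step~2, where producing the least fixed point above $s$ requires chain-subcompleteness to be exploited in the downward direction and demands that V-ascendingness transfer witnesses consistently along decreasing comparisons; getting both limit passages to work under the weakened hypotheses is the most delicate part of the proof.
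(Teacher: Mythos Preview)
Your overall strategy mirrors the paper's: establish a least fixed point, then localize to $[s,\top]$ and repeat to produce joins in $\Fix(F)$. But two genuine gaps remain.

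The obstacle you flag in Step~1 is exactly the crux, and your plan does not say how to overcome it. Transporting each witness $y_x$ into $F(c)$ via $y_x\vee\beta$ for a fixed $\beta\in F(c)$ yields a family that is in general \emph{not} a chain, so chain-subcompleteness of $F(c)$ cannot be invoked. The paper resolves this with a separate lemma of Veinott (Lemma~\ref{lm:Veinott}): one well-orders the index set and runs a transfinite induction, at each stage joining the new element to the running value and at limits taking the supremum of the monotone transfinite sequence built so far; chain-subcompleteness then guarantees the limit stays in the target set. This is what lets one push the supremum of an \emph{arbitrary} family, not just a chain, into $F(c)$. The very same device is needed again in your Step~2 for the nonemptiness of $G(x)$: applying V-ascendingness to each $t\in T$ gives, for each $t$, an element of $F(x)$ above that particular $t$; producing a single element of $F(x)$ above $s=\sup_S T$ requires aggregating over the non-chain $T$, which is the identical difficulty. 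The paper again invokes Lemma~\ref{lm:Veinott} at that point.

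There is a second gap. Your Zorn argument produces only a \emph{maximal} element of $A$, which is indeed a fixed point, but not in general the greatest one; dually, your ``descent argument'' yields a \emph{minimal} element of the corresponding set, not necessarily the least fixed point of $G$. Since the join of $T$ in $\Fix(F)$ is precisely the least element of $\Fix(F)\cap[s,\top]$, a merely minimal fixed point does not identify it. The paper avoids Zorn altogether: it sets $a=\inf_S C$ where $C=\{c\in S:\exists y\in F(c),\,y\le c\}$, uses Veinott's lemma to show $a\in C$, and then from $a=\inf_S C$ together with $\Fix(F)\subset C$ concludes directly that $a=\min\Fix(F)$. That genuine minimum is what makes the localization step go through.
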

Theorem \ref{thm:cpltval} is another generalization of Fact \ref{ft:Zhou}. It  weakens the subcompleteness of the values to   completeness.  We emphasis that subcompleteness is a relative notion, as it depends on the ambient larger lattice, while completeness is an intrinsic property the the lattice.
\begin{thm}\label{thm:cpltval}
	Let $S$ be a nonempty complete lattice. Let $F:S\to 2^S$ be a V-ascending correspondence. Suppose that for every $x\in S$, the value $F(x)$ is a nonempty complete lattice. Then $\Fix(F)$ is a nonempty complete lattice.
\end{thm}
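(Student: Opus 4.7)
My plan is to imitate Zhou's reduction to Tarski's fixed point theorem via selection maps, letting the intrinsic complete-lattice structure of each value $F(x)$ supply the top and bottom elements that subcompleteness would otherwise provide. I would define $f,g\colon S\to S$ by $f(x)=\top_{F(x)}$ and $g(x)=\bot_{F(x)}$. V-ascendingness makes both maps increasing: for $f$, given $x\leq y$, one has $f(x)\vee f(y)\in F(y)$, and since $f(y)=\top_{F(y)}$ this join equals $f(y)$, forcing $f(x)\leq f(y)$; $g$ is handled dually. Tarski's theorem then gives $\Fix(f)$ and $\Fix(g)$ as nonempty complete lattices contained in $\Fix(F)$, so $\Fix(F)$ is nonempty. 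The greatest element of $\Fix(F)$ is $\max\Fix(f)$: any $z\in\Fix(F)$ satisfies $z\leq f(z)$, so Tarski's characterization forces $z\leq\max\Fix(f)$; dually $\min\Fix(g)$ is the smallest.

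For arbitrary suprema, I would take $A\subseteq\Fix(F)$, set $s=\sup_S A$, and note that $s\leq f(s)$ because $a\leq f(a)\leq f(s)$ for $a\in A$. Restricting $F$ to the complete lattice $[s,\top_S]$ via $\hat F(x)=F(x)\cap[s,\top_S]$, I would introduce the minimum selector $\hat g(x)=\min\hat F(x)$, show it is an increasing self-map of $[s,\top_S]$ using V-ascendingness, and apply Tarski to $\hat g$. The smallest fixed point of $\hat g$ lies in $\Fix(F)$, is $\geq A$, and is dominated by every $y\in\Fix(F)\cap[s,\top_S]$ (such $y$ satisfies $y\geq\hat g(y)$), so it equals $\sup_{\Fix(F)}A$. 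A dual construction yields infima, completing the proof that $\Fix(F)$ is a complete lattice.

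\textbf{Main obstacle.} The crux is showing that $\hat F(x)$ is a nonempty complete lattice, so that $\hat g$ is well-defined. Nonemptiness and closure under arbitrary nonempty joins in $F(x)$ are immediate ($f(x)\in\hat F(x)$ and joins of elements $\geq s$ remain $\geq s$), but the existence of a minimum of $\hat F(x)$ when $s\notin F(x)$ is not. The natural candidate is $\zeta(x)=\bigvee_{F(x)}\{a\vee g(x):a\in A\}$: V-ascendingness applied to $a\leq x$, $a\in F(a)$, and $g(x)\in F(x)$ places each $a\vee g(x)$ inside $F(x)$, and any $y\in\hat F(x)$ satisfies $y\geq a$ and $y\geq g(x)$, hence $y\geq a\vee g(x)$ and therefore $y\geq\zeta(x)$ in the order of $F(x)$. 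This identifies $\zeta(x)$ as $\min\hat F(x)$, closing the complete-lattice gap; the monotonicity of $\hat g=\zeta$ then follows from V-ascendingness and Tarski supplies the desired smallest fixed point.
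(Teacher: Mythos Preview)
Your approach is essentially the paper's: the crucial step---building $\zeta(x)=\sup_{F(x)}\{a\vee \min F(x):a\in A\}$ and identifying it as $\min\bigl(F(x)\cap[s,\top_S]\bigr)$---is exactly the paper's construction $v(x)$, and the paper likewise restricts to $[b,+\infty)_S$ and extracts a least fixed point there (using its Lemma~\ref{lm:leastfix} in place of Tarski applied to a monotone selector, a cosmetic difference). One caution: V-ascending is defined only for \emph{strict} inequalities $x<x'$, so your appeals to it for ``$x\le y$'' and ``$a\le x$'' need the equality case handled separately; the paper sidesteps this by first disposing of the case $b\in\Fix(F)$, which guarantees $t<b\le x$ strictly throughout the restricted argument.
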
Theorem \ref{thm:cpltval} allows wider applications than Fact \ref{ft:Zhou}. For example,  we prove
Theorem \ref{thm:introgame}, which extends Milgrom and Robert's theorem from product-form games to possibly non-product-form games.
\begin{thm}\label{thm:introgame}
	Given a supermodular game in the form of \eqref{eq:supgame},	assume that for every $i\in N$ and  every $x\in S$, 
	the function $f_i(\cdot,x_{-i}):S_i(x_{-i})\to \R$ is order upper semicontinuous. Then the set of Nash equilibria is a nonempty complete lattice.
\end{thm}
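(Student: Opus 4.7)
The plan is to apply Theorem \ref{thm:cpltval} to the best-reply correspondence of the game. Writing $N$ for the set of players and $S_i(x_{-i})$ for the strategy set available to player $i$ given opponents' profile $x_{-i}$, I would define $F:S\to 2^S$ by
\[
F(x) = \Bigl\{y \in S : y_i \in \argmax_{z_i \in S_i(x_{-i})} f_i(z_i, x_{-i}) \text{ for every } i \in N\Bigr\}.
\]
Then $\Fix(F)$ is precisely the set of Nash equilibria, so it suffices to verify the two hypotheses of Theorem \ref{thm:cpltval}: $F(x)$ is a nonempty complete lattice for each $x$, and $F$ is V-ascending.

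For the first, I fix $i\in N$ and work on the individual best-reply set. The function $f_i(\cdot,x_{-i})$ is supermodular on the complete lattice $S_i(x_{-i})$, which is part of the supermodular-game hypothesis, and order upper semicontinuous by assumption. A standard chain argument in the spirit of \cite[Thm.~5]{milgrom1990rationalizability} then shows that $\argmax_{z_i\in S_i(x_{-i})} f_i(z_i,x_{-i})$ is nonempty and closed under suprema and infima of arbitrary subsets taken in $S_i(x_{-i})$, hence a complete lattice in its own right. Assembling the coordinatewise conclusions and checking that these joins and meets remain inside the joint strategy set $S$ identifies $F(x)$ as a nonempty complete lattice. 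Crucially, it need not be subcomplete in $S$, which is exactly why Theorem \ref{thm:cpltval}, rather than Fact \ref{ft:Zhou}, is the right tool.

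For V-ascendingness of $F$, I would invoke the monotone comparative statics of supermodular optimization: when $x\le x'$, supermodularity of $f_i$ together with the monotonicity in the strong set order of the constraint correspondence $x_{-i}\mapsto S_i(x_{-i})$, built into the definition of a supermodular game, yields the requisite comparison between $\argmax f_i(\cdot,x_{-i})$ and $\argmax f_i(\cdot,x_{-i}')$ in the weaker V-ascending sense of Definition \ref{df:inc} (\ref{it:Vasc}). Putting these coordinate-wise comparisons together produces the V-ascending property of $F$ on $S$.

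The main obstacle, and the reason Theorem \ref{thm:introgame} truly goes beyond the product-form Milgrom--Roberts setting, lies in the interaction between the non-product joint strategy set $S$ and the best-reply construction: coordinatewise suprema and infima of best replies might a priori escape $S$, and the V-ascending comparisons must be read inside $S$ rather than in the ambient product lattice. Once this bookkeeping is handled, the hypotheses of Theorem \ref{thm:cpltval} are in place and the theorem applies verbatim, delivering that $\Fix(F)$, and hence the Nash equilibrium set, is a nonempty complete lattice.
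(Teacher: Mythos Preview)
Your plan has a genuine gap, and it is precisely the point you flag as ``bookkeeping'' at the end: in the non-product case the coordinatewise best-reply correspondence $F$ need not take nonempty values, so Theorem~\ref{thm:cpltval} cannot be invoked. Concretely, take $N=\{1,2\}$, $S_1=S_2=\{0,1,2\}$, $S=\{(a,b):a\le b\}$ (a sublattice of $S_1\times S_2$), and payoffs $f_1(a,b)=ab$, $f_2(a,b)=ab-2b$. All supermodular-game hypotheses hold and, being finite, everything is order upper semicontinuous. At $x=(0,2)$ one has $S_1(2)=S_2(0)=\{0,1,2\}$, $\argmax_{z_1}f_1(z_1,2)=\{2\}$ and $\argmax_{z_2}f_2(z_2,0)=\{0\}$, so the only candidate profile is $(2,0)\notin S$ and $F((0,2))=\emptyset$. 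Thus the obstacle you mention is not a matter of checking that joins and meets stay in $S$; already the \emph{nonemptiness} of $F(x)$ fails.

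The paper avoids this by working not with the coordinatewise best-reply correspondence but with the aggregate function $g(y,x)=\sum_{i\in N}f_i(y_i,x_{-i})$ and the correspondence $Y(x)=\argmax_{y\in S(x)}g(y,x)$. Because the maximization is over $S(x)$ rather than over the product $\prod_i S_i(x_{-i})$, nonemptiness comes from a single optimization on a complete lattice, and the supermodularity/increasing-differences structure transfers to $g$ so that $Y$ is ascending with complete-lattice values. Theorem~\ref{thm:cpltval} is then applied to $Y$, and a separate lemma identifies $\Fix(Y)$ with the set of Nash equilibria (unilateral deviations $(z_i,y_{-i})$ lie in $S(y)$ when $y\in S$, so a fixed point of $Y$ is a Nash equilibrium, and conversely). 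In the example above $Y((0,2))=\{(0,0),(1,1),(2,2)\}\neq\emptyset$. If you want to salvage your route, you would need an additional hypothesis guaranteeing $(\prod_i B_i(x_{-i}))\cap S\neq\emptyset$ for all $x\in S$; absent that, the aggregate-$g$ device is the key idea you are missing.
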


\section{Preliminaries} We recall necessary definitions, provide examples of Theorems \ref{thm:myZhou} and \ref{thm:cpltval}, and demonstrate several preparatory lemmas. A set with a partial order is called a \emph{poset}.

\begin{df}\label{df:lattice}
	Let $S$ be a poset. \begin{enumerate}
		\item\label{it:supcplt} Let $T$ be a subset of $S$.  If for every nonempty subset $A\subset T$, both $\sup_S(A)$ and $\inf_S(A)$ exist and belong to $T$, then $T$ is called a \emph{subcomplete sublattice} of $S$.
		\item\label{it:cplt} The poset $S$ is called a \emph{complete lattice} if it is a subcomplete   sublattice of itself.
		\item The poset $S$ is called \emph{join-complete} if for every nonempty subset $A\subset S$, the supremum $\sup_SA$ exists.
		\item\label{it:chainsubcplt} (\cite[p.1420]{heikkila2006fixed}) Let $T$ be a subset of $S$.  If for every nonempty chain $C\subset T$,  the element $\sup_S(C)$(resp. $\inf_S(C)$) exists and belongs to $T$, then $T$ is called \emph{chain-subcomplete upwards}(resp. \emph{downwards}) in $S$. If $T$ is chain-subcomplete upwards and downwards, then $T$ is called \emph{chain-subcomplete} in $S$.
	\end{enumerate}
\end{df}
A subcomplete sublattice is chain-subcomplete, but the converse is not true. 

\begin{df}\label{df:inc} Let $X$ be a poset, and let $Y$ be a lattice. Let $F:X\to 2^Y$ be a correspondence.
	\begin{enumerate}
		\item \label{it:Zasc}(\cite[p.296]{zhou1994set}) If for any $x\le x'$ in $X$, every $y\in F(x)$ and every $y'\in F(x')$, one has $y\wedge y'\in F(x)$ and $y\vee y'\in F(x')$, then $F$ is called \emph{ascending}.
		\item\label{it:Vasc} (\cite[Ch.~4, Sec.~3]{veinott1992lattice}) If for any  $x<x'$ in $X$,\footnote{Note the strict inequality here.} every $y\in F(x)$ and $y'\in F(x')$,  we have 	$y\wedge y'\in F(x)$(resp. $y\vee y'\in F(x')$), then $F$ is called \emph{lower}(resp. \emph{upper}) \emph{V-ascending}. If $F$ is both upper and lower  V-ascending, then $F$ is called \emph{V-ascending}.
	\end{enumerate}
\end{df}
An ascending correspondence is V-ascending, but not vice versa. 

We give some examples to show the relation between Fact \ref{ft:Zhou}, Theorems \ref{thm:myZhou} and \ref{thm:cpltval}.	\begin{eg}
	Let $S=\{0,1,a,b\}$ be a lattice, where $0=\min S$, $1=\max S$ and $a,b$ are incomparable. Define a correspondence $F:S\to 2^S$ by $F(0)=\{0\}$, $F(1)=\{1\}$ and $F(a)=F(b)=\{a,b\}$. Then $F$ is V-ascending but not ascending. The value $F(a)$ is not a lattice. Thus, one cannot apply Fact \ref{ft:Zhou} nor Theorem \ref{thm:cpltval} to $F$. Since $F(a)$ has neither maximum nor minimum, one cannot apply \cite[Thm.~4]{sabarwal2023general}, neither.  Theorem \ref{thm:myZhou} is applicable in this case, so it is  a proper generalization of Fact \ref{ft:Zhou}.
\end{eg}	\begin{eg}
	Let $S=\{0,a,b,1,2\}$, where $0\le a,b\le 1<2$ and $a,b$ are incomparable. Then $S$ is a complete lattice. Define a correspondence $F:S\to 2^S$ by $F(0)=\{0\}$, $F(a)=F(b)=\{0,a,b,2\}$, $F(1)=\{1,2\}$ and $F(2)=\{2\}$. Then $F$ is V-ascending but not ascending. The value $F(a)$ is not a sublattice of $S$. Thus, Fact \ref{ft:Zhou} does not apply to $F$. Still, Theorem \ref{thm:cpltval} shows that $\Fix(F)$ is a nonempty complete lattice. Therefore, Theorem \ref{thm:cpltval} is also a proper generalization of Fact \ref{ft:Zhou}.
\end{eg}
\begin{eg}
	Let $S=[0,3]$. Define a correspondence $F:S\to 2^S$ by  \[F(x)=\begin{cases}
		\{0\} &\text{if }x<1,\\
		[0,1)\cup\{3/2\} &\text{if } x=1,\\
		\{3/2\} &\text{if } 1<x<2,\\
		\{3/2\}\cup(2,3]&\text{if }x=2,\\
		\{3\}&\text{if }x>2.
	\end{cases}\] Then $F$ is ascending, but $F(1)$ is not chain-subcomplete in $S$. Thus, Theorem \ref{thm:myZhou} is not applicable to $F$. Moreover, the restriction $F':[2,3]\to 2^{[2,3]}, \, x\mapsto F(x)\cap [2,3]$ is not an isotone infimum model in the sense of \cite[p.9]{sabarwal2023general}, so $F$ is not isotone infimum on upper intervals in the sense of \cite[p.22]{sabarwal2023general}. Similarly, considering another restriction $F'':[0,1]\to 2^{[0,1]},\, x\mapsto F(x)\cap [0,1]$, one obtains that $F$ is not isotone infimum on lower intervals. Consequently, $F$ is not in the two classes of lattice models in \cite[Thm.~4]{sabarwal2023general}. Still, $F$ satisfies the hypothesis of Theorem \ref{thm:cpltval}. Therefore, Theorem \ref{thm:cpltval} is not covered by Theorem \ref{thm:myZhou} nor Sabarwal's theorem.
\end{eg}
Several lemmas used in the proof of the main results are as follows. 
\begin{lm}\label{lm:jointmin}
	A join-complete poset  with a least element  is a complete lattice.
\end{lm}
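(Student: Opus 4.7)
The plan is to show that every nonempty subset has an infimum, using the standard trick of taking the supremum of the set of lower bounds. Let $S$ be a join-complete poset with least element, which I will denote $0$, and let $A \subset S$ be an arbitrary nonempty subset. It suffices to produce $\inf_S A$, since $\sup_S A$ is already granted by join-completeness.

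First I would introduce the set of lower bounds $L = \{x \in S : x \le a \text{ for every } a \in A\}$. Because $0 \in L$, the set $L$ is nonempty, so by join-completeness the element $s := \sup_S L$ exists in $S$. I would then verify two things. On the one hand, for each $a \in A$, the element $a$ is by definition an upper bound of $L$, hence $s \le a$; consequently $s$ itself lies in $L$, so it is the greatest lower bound of $A$. On the other hand, any lower bound of $A$ lies in $L$ and is therefore $\le s$. This shows $s = \inf_S A$.

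Combined with the hypothesis that $\sup_S A$ exists, we conclude that $S$ satisfies the defining property of a complete lattice in Definition \ref{df:lattice} (\ref{it:cplt}). There is no real obstacle here: the only point to be careful about is that the set $L$ of lower bounds is nonempty, which is exactly where the assumption of a least element is used.
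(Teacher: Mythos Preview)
Your proof is correct and follows exactly the same approach as the paper: form the set of lower bounds of the given nonempty subset, use the least element to ensure this set is nonempty, take its supremum by join-completeness, and verify that this supremum is the desired infimum. There is no substantive difference between the two arguments.
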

\begin{proof}
	Let $X$ be join-complete poset  with a least element.	For every nonempty subset $Y$ of $X$, the subset $Z:=\{x\in X:x\le y,\forall y\in Y\}$ contains $\min X$. As $Z$ is nonempty and $X$ is join-complete, the element $u:=\sup_XZ$ exists. For   every $z\in Z$ and every $y\in Y$, we have $z\le y$, so $u\le y$. Therefore, $u\in Z$ and hence $u=\max Z=\inf_XY$. Thus $X$ is complete.
\end{proof}
Lemma \ref{lm:Veinott} is from Veinott unpublished note \cite{veinott1992lattice}. For the convenience of the reader, we reproduce the proof.
\begin{lm}[Veinott]\label{lm:Veinott}
	Let $(S,\le)$ be a poset. Let $A,B$ be two subsets of $S$. Assume that $B$ is chain-subcomplete downwards(resp. upwards). Fix $x\in A\cap B$. If for every $a\in A\setminus\{x\}$ and every $b\in B$, the element $a\wedge b$(resp. $a\vee b$) exists and is in $B$, then $\inf_S(A)$(resp. $\sup_S(A)$) exists and is in $B$.
\end{lm}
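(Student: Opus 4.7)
The plan is to handle the ``downwards'' case first; the ``upwards'' case will then follow by dualizing the order of $S$. I will construct a set $M\subseteq B$ as the smallest subset containing $x$ that is stable under the operation $b\mapsto a\wedge b$ for every $a\in A\setminus\{x\}$ and stable under taking infimums in $S$ of its nonempty chains. (Formally, $M$ is the intersection of all subsets of $S$ with these three properties, or equivalently is built by transfinite iteration starting from $\{x\}$.) The hypothesis ensures that each closure step produces elements of $B$, so $M\subseteq B$; and by induction on the construction, every element of $M$ lies below $x$.

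By construction $M$ is chain-subcomplete downwards in $S$, so Zorn's lemma applied to $M$ with the reversed order yields a minimal element $m^{*}\in M$. For each $a\in A\setminus\{x\}$, the element $a\wedge m^{*}$ lies in $M$ and satisfies $a\wedge m^{*}\le m^{*}$, so minimality forces $a\wedge m^{*}=m^{*}$ and hence $m^{*}\le a$. Combined with $m^{*}\le x$, this shows $m^{*}$ is a lower bound of $A$ in $S$.

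To prove that $m^{*}=\inf_{S}(A)$, I will argue by induction on the construction of $M$ that any lower bound $s\in S$ of $A$ satisfies $s\le m$ for every $m\in M$: the base case $m=x$ is immediate from $s\le x$; in the meet step, $s\le m$ and $s\le a$ imply $s$ bounds $\{a,m\}$ from below, so $s\le a\wedge m$; in the chain step, $s$ is a lower bound of the chain, hence at most its infimum in $S$. Specializing to $m=m^{*}$ yields $s\le m^{*}$, so $m^{*}=\inf_{S}(A)$; and $m^{*}\in B$ since $M\subseteq B$.

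The main obstacle is setting up $M$ correctly. It must be closed simultaneously under the binary meets $a\wedge(-)$ for $a\in A\setminus\{x\}$ and under the infinitary operation of taking infimums of chains, and the fact that $B$ is chain-subcomplete downwards is precisely what is needed so that both closures stay inside $B$. Once $M$ is in place, Zorn's lemma supplies a minimal element, the ``meet-stability'' forces that element to be a lower bound of $A$, and the induction on the construction upgrades it from ``some'' lower bound to the greatest lower bound.
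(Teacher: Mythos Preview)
Your argument is correct. The closure $M$ is well-defined (since $B$ itself is one of the sets being intersected), every element of $M$ lies in $B$ and below $x$, Zorn applied to $(M,\ge)$ gives a minimal element, and the induction on the construction shows that this minimal element dominates every lower bound of $A$. One small point worth making explicit: when you verify that an arbitrary lower bound $s$ of $A$ satisfies $s\le m$ for all $m\in M$, the cleanest way (matching your ``intersection'' definition of $M$) is to observe that $\{m\in B:s\le m\}$ is itself one of the sets over which you intersect, so $M$ is contained in it; this avoids any worry about whether $a\wedge b$ exists for $b\notin B$.

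Your route is genuinely different from the paper's. The paper fixes a well-ordering of $A\setminus\{x\}$, sets $A_t$ to be the initial segment up to $t$ (with $x$ thrown in), and proves by transfinite induction along this well-order that $\inf_S(A_t)$ exists and lies in $B$; at a successor step one meets with the new element, and at a limit step the partial infima form a descending chain in $B$ whose infimum is taken using chain-subcompleteness. Thus the paper \emph{explicitly constructs} $\inf_S(A)$ as the final term of a transfinite descending sequence. You instead form an abstract closure, invoke Zorn to get a minimal element, and then argue separately that it is the greatest lower bound. Both use the axiom of choice in equivalent forms. The paper's argument is a bit more direct (the infimum appears as the output of the recursion), while yours is more structural and avoids choosing a specific enumeration of $A$; in exchange, it requires the extra verification that every lower bound of $A$ lies below every element of $M$.
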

\begin{proof}
	By symmetry, it suffices to prove the statement without parentheses.	The statement holds when $A=\{x\}$. Assume that $A\setminus\{x\}$ is nonempty. By the well ordering principle, there is a well order $\preceq$ on the nonempty set $A\setminus\{x\}$. Let $T$ be a copy of $A$ together with an exotic element $\infty$. Extend the order $\preceq$ from $A\setminus\{x\}$ to $T$ with $\min T=x$ and $\max T=\infty$. Then $(T,\preceq)$ is still a well ordered set. Define a map $f:T\to A$ by \[f(t)=\begin{cases}
		t &\text{if } t\in A,\\
		x &\text{if } t=\infty.
	\end{cases}\]
	For every $t\in T$, let $A_t=\{f(t'):t'\le t\}$.
	We  show that $\inf_S(A_t)$ exists and belongs to $B$ for every $t\in T$ by transfinite induction.
	
	For the base case $t=x$, one has $A_x=\{x\}$, then $\inf_S(A_x)=x$ is in $B$.
	
	For $t_0(\neq x)\in T$,  assume the statement is proved for for all $t\prec t_0$. We prove the statement for $t_0$. For every $t\preceq t'\prec t_0$, one has $A_{t}\subset A_{t'}$. By the inductive hypotheses, $\inf_S(A_t)$ and $\inf_S(A_{t'})$ exist and are in $B$. Moreover, $\inf_S(A_t)\ge \inf_S(A_{t'})\ge x$. Therefore, $\{\inf_S(A_t)\}_{t\prec t_0}$ is a nonempty chain in $B$. As $B$ is chain-subcomplete downwards in $S$, their infimum $m:=\inf_S\{\inf_S(A_t)\}_{t\prec t_0}$ exists and is in $B$. Moreover, $m\ge x$. If $f(t_0)=x$, then $f(t_0)\wedge m=f(t_0)=x\in B$. If $f(t_0)\neq x$, then $f(t_0)\wedge m\in B$ by assumption. In both cases, $f(t_0)\wedge m$ exists an is in $B$.
	
	Since $A_{t_0}=\{f(t_0)\}\cup\cup_{t\prec t_0}A_t$, we get $\inf_S(A_{t_0})=f(t_0)\wedge m\in B$, i.e., the statement holds for $t_0$. The induction is completed. The statement for $t=\infty$ gives that $\inf_S(A)$ exists and is in $B$. 
\end{proof}
\begin{rk}
	The chain-subcomplete upwards condition in Lemma \ref{lm:Veinott} cannot be weakened to  chain-complete upwards condition. For example, take $S=\R$, $A=[0,1)$ and $B=[0,1)\cup\{2\}$ and $x=0$. Then $B$ is chain-complete upwards, but $\sup_S(A)=1\notin B$.
\end{rk}	\begin{lm}\label{lm:Zhou}
	Let $S$ be a nonempty complete lattice. Let $F:S\to2^S$ be a lower V-ascending correspondence, such that for every $x\in S$, the value $F(x)$ is  nonempty and chain-subcomplete downwards in $S$.  Then  $F$  has a least fixed point.
\end{lm}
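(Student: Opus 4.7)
The plan is to identify $x^* := \inf_S \Fix(F)$ (once nonemptiness is known) as the least fixed point. An auxiliary role is played by $A := \{x\in S : F(x)\cap (-\infty,x]\ne\emptyset\}$, which contains $\top$ trivially since $F(\top)\ne\emptyset$ and every element of $F(\top)$ lies below $\top$. The core ingredient will be a descent procedure which, starting from any $x_0\in A$, produces a fixed point $\phi\le x_0$; applied at $\top$ this gives $\Fix(F)\ne\emptyset$, and applied at $x^*$ it will force $x^*\in\Fix(F)$.

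The descent is a transfinite weakly decreasing sequence $\{x_\alpha\}$. At successor steps, having $x_\alpha\in A$, I pick any $x_{\alpha+1}\in F(x_\alpha)$ with $x_{\alpha+1}\le x_\alpha$; if $x_{\alpha+1}=x_\alpha$ the process stops at a fixed point. Otherwise $x_{\alpha+1}<x_\alpha$ strictly, and $x_{\alpha+1}\in A$ follows by picking any $z\in F(x_{\alpha+1})$ and applying lower V-ascendingness to $x_{\alpha+1}<x_\alpha$, $z\in F(x_{\alpha+1})$, $x_{\alpha+1}\in F(x_\alpha)$, yielding $z\wedge x_{\alpha+1}\in F(x_{\alpha+1})\cap (-\infty,x_{\alpha+1}]$. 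At a limit $\lambda$, set $x_\lambda := \inf_S\{x_\alpha : \alpha<\lambda\}$; the decisive verification is $x_\lambda\in A$. For this, fix any $y\in F(x_\lambda)$ and apply lower V-ascendingness to each strict pair $x_\lambda<x_\alpha$, with $y\in F(x_\lambda)$ and $x_{\alpha+1}\in F(x_\alpha)$, to get $y\wedge x_{\alpha+1}\in F(x_\lambda)$. These form a descending chain in $F(x_\lambda)$ with infimum $y\wedge x_\lambda$, and chain-subcompleteness downwards of $F(x_\lambda)$ puts this infimum back in $F(x_\lambda)$, certifying $x_\lambda\in A$. A cardinality argument (a strictly descending sequence in $S$ has length at most $|S|^+$) then forces termination at a fixed point.

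With descent in hand, applying it at $x_0=\top$ gives $\Fix(F)\ne\emptyset$, so $x^*:=\inf_S\Fix(F)$ is defined. To show $x^*\in\Fix(F)$, suppose for contradiction it is not; then every $c\in\Fix(F)$ satisfies $c>x^*$. Pick any $y^*\in F(x^*)$ and apply Lemma~\ref{lm:Veinott} with $A_V:=\{y^*\}\cup\Fix(F)$, $B:=F(x^*)$, and the distinguished element $y^*\in A_V\cap B$: for every $c\in\Fix(F)$ and every $b\in F(x^*)$, lower V-ascendingness (using $x^*<c$, $b\in F(x^*)$, $c\in F(c)$) yields $c\wedge b\in F(x^*)$, so the hypotheses of Lemma~\ref{lm:Veinott} hold. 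The conclusion is $\inf_S A_V = y^*\wedge x^*\in F(x^*)$, which is $\le x^*$, so $x^*\in A$. Running descent from $x^*$ produces a fixed point $\phi\le x^*$, and $\phi\in\Fix(F)$ forces $\phi\ge\inf_S\Fix(F)=x^*$, whence $\phi=x^*\in\Fix(F)$---the desired contradiction.

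The hardest step will be the limit case of the descent, where one must manufacture an element of $F(x_\lambda)$ lying below $x_\lambda$ out of the bare chain-subcompleteness (downwards) of $F(x_\lambda)$. The meet-and-transfer trick above is precisely what the combination of lower V-ascendingness and chain-subcompleteness is designed to exploit; the Veinott-style argument that $x^*\in A$ is the same idea packaged by Lemma~\ref{lm:Veinott}, while the transfinite termination is routine.
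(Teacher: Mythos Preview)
Your proof is correct, modulo one harmless slip: in the limit step you assert that the infimum of the chain $\{y\wedge x_{\alpha+1}\}_{\alpha<\lambda}$ equals $y\wedge x_\lambda$, but in a general complete lattice meet need not distribute over infima of chains. What you actually need (and what does follow) is only that this infimum lies below each $x_{\alpha+1}$, hence below $x_\lambda$; together with chain-subcompleteness downwards of $F(x_\lambda)$ this still yields $x_\lambda\in A$.

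The route, however, differs from the paper's. You build a transfinite descent that produces a fixed point below any element of $A$, apply it once at $\top$ to get $\Fix(F)\neq\emptyset$, set $x^*:=\inf_S\Fix(F)$, invoke Lemma~\ref{lm:Veinott} to show $x^*\in A$, and then descend again. The paper instead goes straight to $a:=\inf_S A$ (your set $A$ is exactly the paper's $C$) rather than $\inf_S\Fix(F)$. It shows $a\in A$ via a single application of Lemma~\ref{lm:Veinott} to $\{\alpha\}\cup\bigcup_{c\in A}F(c)$, and then a one-line minimality argument (if $x_a\in F(a)$ with $x_a<a$, then lower V-ascendingness gives $x_a\in A$, contradicting $a=\inf_S A$) forces $a\in\Fix(F)$; since $\Fix(F)\subset A$, this $a$ is automatically the least fixed point. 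In effect, the paper's choice of $\inf_S A$ over $\inf_S\Fix(F)$ collapses your entire transfinite descent into a single minimality step. Your construction is more hands-on and arguably more constructive in spirit (it explicitly manufactures fixed points by iteration), while the paper's argument is shorter and avoids transfinite recursion altogether.
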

\begin{proof}Since $S$ is complete, the element $m:=\max S$ exists. 
	Let $C=\{c\in S:\exists x\in F(c),x\le c\}$. Because $F(m)$ is nonempty, there is  $x\in F(m)$. Then $x\le m$, so $m\in C$. In particular, $C$ is nonempty. By the completeness of $S$ again, $a=\inf_S(C)$ exists. 
	
	We show that $a\in C$. Otherwise, $a\notin C$. Because $F(a)$ is nonempty, we may fix $\alpha\in F(a)$. Let $A=\{\alpha\}\cup \cup_{c\in C}F(c)$. 
	For every $c\in C$, one has $a<c$. For every $d\in F(c)$ and every $\alpha'\in F(a)$, because $F$ is lower V-ascending, we have $d\wedge \alpha'\in F(a)$. Because $F(a)$ is chain-subcomplete downwards in $S$, by Lemma \ref{lm:Veinott}, one has $\inf_S(A)\in F(a)$.  Since $c\in C$, there is $x_c\in F(c)$ with $c\ge x_c$. One has $x_c\ge \inf_S(A)$. Then $\inf_S(A)\le c$ for all $c\in C$. Therefore, $\inf_S(A)\le a$ and $a\in C$, a contradiction.
	
	We prove  $a\in \Fix(F)$. Assume the contrary $a\notin  \Fix(F)$. From last paragraph, there is $x_a\in F(a)$ with $x_a\le a$. Then $x_a<a$. Since $F(x_a)\neq\emptyset$, there is  $y\in F(x_a)$. As $F$ is lower V-ascending, one has $y\wedge x_a\in F(x_a)$. Therefore, $x_a\in C$. Thus, $a\le x_a$, a contradiction. The claim is proved.  In particular, $\Fix(F)$ is nonempty.
	
	For every $e\in \Fix(F)$, one has $e\in C$, so $e\ge a$, which proves $a=\min \Fix(F)$. 
\end{proof}	\begin{lm}\label{lm:leastfix}
	Let $S$ be a nonempty complete lattice. Let $F:S\to 2^S$ be a lower V-ascending correspondence. Suppose that for every $x\in S$, the value $F(x)$ has a least element. Then $F$ has a least fixed point. 
\end{lm}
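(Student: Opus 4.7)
The natural idea is to reduce to the single-valued case and apply Tarski's fixed point theorem. Define $g:S\to S$ by $g(x)=\min F(x)$; this is well-defined by hypothesis. I would first verify that $g$ is increasing: given $x<x'$ in $S$, set $y=g(x)\in F(x)$ and $y'=g(x')\in F(x')$; since $F$ is lower V-ascending, $y\wedge y'\in F(x)$, hence $y\le y\wedge y'\le y'$, i.e., $g(x)\le g(x')$. (The case $x=x'$ is trivial.)

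Next, I would apply Tarski's fixed point theorem to the increasing self-map $g$ on the nonempty complete lattice $S$: it produces a least fixed point $a$ of $g$. Since $a=g(a)\in F(a)$, we have $a\in\Fix(F)$, so $\Fix(F)$ is nonempty.

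It remains to show $a=\min\Fix(F)$. For any $e\in\Fix(F)$, one has $e\in F(e)$, hence $g(e)=\min F(e)\le e$. Using the standard characterization of the least fixed point of an increasing map on a complete lattice as the infimum of the set of post-fixed points $\{x\in S:g(x)\le x\}$, it follows that $a\le e$, which completes the proof.

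I do not foresee a real obstacle: the monotonicity of $g$ follows directly from the lower V-ascending hypothesis applied to the pair $(g(x),g(x'))$, and once $g$ is shown to be an increasing self-map, Tarski's theorem delivers both the existence and minimality needed. The only subtle point is the strict inequality in the definition of lower V-ascending, but this is harmless since $x=x'$ gives $g(x)=g(x')$ trivially.
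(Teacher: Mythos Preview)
Your proof is correct. Both you and the paper ultimately work with the same set $A=\{x\in S:\min F(x)\le x\}$ and identify the least fixed point as $\inf_S A$; the difference is packaging. You define the single-valued map $g(x)=\min F(x)$, verify it is increasing via the lower V-ascending property, and then invoke Tarski's theorem as a black box (using the standard description of the least fixed point as the infimum of the post-fixed points). The paper instead argues directly with $F$: it sets $x_*=\inf_S A$ and checks by hand, using the lower V-ascending condition, that $x_*\in A$ and then that $x_*\in\Fix(F)$---in effect reproving the relevant fragment of Tarski's theorem inline. Your reduction is cleaner and makes the dependence on Tarski explicit; the paper's version is self-contained and avoids citing the single-valued theorem.
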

\begin{proof}
	Let $A:=\{x\in S:\min F(x)\le x\}$. Since $S$ is  nonempty complete, $\max S$ exists and belongs to $A$. In particular, $A$ is nonempty. Thus, the element $x_*:=\inf_SA$ exists. 
	
	Set $m:=\min F(x_*)$. We prove that $x_*\in A$. Otherwise, $x_*\notin A$. Then for every $a\in A$, one has $a>x_*$ and $\min F(a)\le a$. Because $F$ is lower V-ascending, $\min F(a)\wedge m\in F(x_*)$. Since $\min F(a)\wedge m\le m$, one has \[m=\min F(a)\wedge m\le \min F(a)\le a.\] Therefore, $m\le x_*$, a contradiction. 
	
	We prove that $x_*\in\Fix(F)$. Otherwise, $m<x_*$. Because $F$ is lower V-ascending, $\min F(m)\wedge m\in F(m)$. Since $\min F(m)\wedge m\le \min F(m)$, one has $\min F(m)=\min F(m)\wedge m\le m$. Therefore, $m\in A$ and hence $m\ge x_*$, a contradiction.
	
	For every  $x\in\Fix(F)$, one has $x\in F(x)$, so $\min F(x)\le x$. Thus, $x\in A$ and hence $x\ge x_*$. \end{proof}
\section{Proof of Theorem \ref{thm:myZhou}}
As $F$ is lower V-ascending and the values are chain-subcomplete downwards, by Lemma \ref{lm:Zhou}, the subset $\Fix(F)$ is nonempty and has a least element. We show that $\Fix(F)$ is join-complete, i.e., for every nonempty subset $U\subset \Fix(F)$,  the element $\sup_{\Fix(F)}(U)$ exists.

By the completeness of $S$, the supremum $b:=\sup_SU$ exists. If $b\in \Fix(F)$, then $\sup_{\Fix(F)}(U)$ is $b$. Now assume that $b\notin \Fix(F)$. Since $F(b)$ is nonempty, one may take $\beta\in F(b)$. Let $A=\{\beta\}\cup U$. For every $u\in U$, we have $u\le b$. As $b\notin \Fix(F)$, we have a strict inequality $u<b$. For every $\beta'\in F(b)$, since $F$ is upper V-ascending, one has $u\vee \beta'\in F(b)$. Because $F(b)$ is chain-subcomplete upwards in $S$, by Lemma \ref{lm:Veinott}, $\sup_S(A)\in F(b)$. Since $A\supset U$, we have  $\sup_S(A)\ge b$.

Let $S':=[b,+\infty)_S$ be a closed interval of $S$. Then $S'$ is a nonempty subcomplete sublattice of $S$.
Define a correspondence \[F':S'\to 2^{S'},\quad F'(s)=F(s)\cap S'.\] 

We show that for every $s\in S'$, the value $F'(s)$ is nonempty. In fact, if $s=b$, then $\sup_S(A)\in F'(s)$. If $s>b$, as $F(s)\neq\emptyset$, there is $t\in F(s)$. Since $F$ is upper V-ascending, one has $t\vee \sup_S(A)\in F(s)$ and $b\le \sup_S(A)\le t\vee \sup_S(A)$. Then $t\vee \sup_S(A)\in F'(s)$. 

For every $s\in S'$, since $F(s)$ is chain-subcomplete downwards in $S$, the value $F'(s)$ is chain-subcomplete downwards in $S'$. 

We prove that the correspondence $F'$ is lower V-ascending as follows.  For any $x< y$ in $S'$, every $s\in F'(x)$ and every $t\in F'(y)$, because $F$ is lower V-ascending, one has $s\wedge t\in F(x)$ and $s\wedge t\ge b$. Thus, one has $s\wedge t\in F'(x)$. 

Therefore, by Lemma \ref{lm:Zhou}, the set $\Fix(F')$  is nonempty and has a minimum element $m$.

Since $m\in F'(m)\subset F(m)$, one gets $m\in \Fix(F)$. For every $u\in U$, one has $u\le b\le m$. For every $e\in \Fix(F)$ with $e\ge u$ for all $u\in U$, we have $e\in S'$. Then $e\in F(e)\cap S'=F'(e)$. One has $e\in \Fix(F')$ and $e\ge m$. Therefore, $\sup_{\Fix(F)}(U)=m$. 

The completeness of $\Fix(F)$ follows from Lemma \ref{lm:jointmin}.

\section{Proof of Theorem \ref{thm:cpltval}}
For every $x\in S$, the value $F(x)$ is complete, so $\min F(x)$ exists. By Lemma \ref{lm:leastfix}, $F$ has a least fixed point. In particular, the set $\Fix(F)$ is nonempty.

For every nonempty subset $T$ of $\Fix(F)$, we show that $\sup_{\Fix(F)}T$ exits.  Since $S$ is complete, the element $b:=\sup_ST$ exists. If $b\in \Fix(F)$, then $\sup_{\Fix(F)}T=b$. Thus, one may assume  $b\notin\Fix(F)$. Then $S':=[b,+\infty)_S$ is a closed interval in $S$, hence a nonempty complete lattice. Define a correspondence $F':S'\to 2^{S'}$ by $F'(x)=F(x)\cap S'$.

For every $x\in S'$ and every $t\in T$, one has $t\le b$. Since $b\notin \Fix(F)$, one has $t<b$. Because $t\in \Fix(F)$, one has $t\in F(t)$. As $F$ is upper V-ascending, $t\vee \min F(x)\in F(x)$. By the completeness of $F(x)$, the element $v(x):=\sup_{F(x)}\{t\vee \min F(x):t\in T\}$ exists. For every $t\in T$, one has $t\le t\vee \min F(x)\le v(x)$. Thus, $b\le v(x)$ and $v(x)\in F'(x)$. In particular, $F'(x)$ is nonempty.

We prove that $v(x)=\min F'(x)$. For every $y\in F'(x)$, $y\ge \min F(x)$. For every $t\in T$, $y\ge b\ge t$, so $y\ge t\vee \min F(x)$. Since $y\in F(x)$, one has $y\ge v(x)$. 

We check that $F'$ is lower V-ascending. For any $x<x'$ in $S'$, every $y\in F'(x)$ and every $y'\in F'(x')$, since $F$ is lower V-ascending, one has $y\wedge y'\in F(x)$. Since $b\le y$ and $b\le y'$, one has $b\le y\wedge y'$ and $y\wedge y'\in S'$. Thus, $y\wedge y'\in F'(x)$. 

By Lemma \ref{lm:leastfix}, the correspondence $F'$ has a least fixed point $b'$. Since $\Fix(F')=\Fix(F)\cap S'$, one gets $b'=\sup_{\Fix(F)}T$. Therefore, $\Fix(F)$ is join-complete. By Lemma \ref{lm:jointmin}, the poset $\Fix(F)$ is a nonempty complete lattice.
\section{Nash equilibria of supermodular games}
We review the classical definition of supermodular games and prove Theorem \ref{thm:introgame}. It is a generalization of Topkis's theorem, as Corollary \ref{cor:Topkis421} shows. 
\begin{df}\label{df:supermodulargameinwkZhou}
	A supermodular game \begin{equation}\label{eq:supgame}(N,\{S_i\},S,\{f_i\})\end{equation} consists of the following data:
	\begin{itemize}
		\item a nonempty finite set of players $N$,
		\item for each $i\in N$, a nonempty  lattice $S_i$ of the strategies of player $i$,
		\item  a nonempty  sublattice\footnote{A product of posets is a poset under the product order.} $S\subset \prod_{i\in N}S_i$ comprised of feasible joint strategies such  that every projection $S\to S_i$ is surjective,
		\item for  every $i\in N$, a payoff function $f_i: S\to \R$ 	\end{itemize}  such that 	for every $i\in N$, 
	\begin{enumerate}\item for every $x_{-i}\in S_{-i}:=\prod_{j\neq i}S_j$, the function $f_i(\cdot,x_{-i}):S_i(x_{-i})\to \R$ is  supermodular, where $S_i(x_{-i})$ is the sublattice $\{x_i\in S_i:(x_i,x_{-i})\in S\}$ of $S_i$;  
		\item  the function $f_i$ has increasing difference relative to the subset $S$ of $S_i\times S_{-i}$ in the sense of \cite[p.42]{topkis1998supermodularity}..
	\end{enumerate}
\end{df}
Fix  a supermodular game \eqref{eq:supgame}.  For $x\in S$, put $S(x)=(\prod_{i\in N}S_i(x_{-i}))\cap S$.
\begin{lm}\label{lm:subcplt}Suppose that  $S\subset \prod_{i\in N}S_i$  is subcomplete. The following results hold.
	\begin{enumerate}\item\label{it:Sicplt} For every $i\in N$, the poset $S_i$ is a complete lattice.
		\item\label{it:Six-i} For every $i\in N$ and every $x_{-i}\in S_{-i}$, the set $S_i(x_{-i})$ is a subcomplete sublattice of $S_i$.
		\item\label{it:Sx} For every $x\in S$, the subset $S(x)$ is a subcomplete sublattice of $S$.\end{enumerate}\end{lm}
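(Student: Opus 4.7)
The plan is to first establish a componentwise formula for suprema and infima in a product of lattices, and then to derive (1)--(3) as three short applications of that formula together with the subcompleteness of $S$ in $\prod_{j\in N}S_j$.

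The formula I want states: if $A$ is a nonempty subset of $\prod_{j\in N}S_j$ for which $y:=\sup_{\prod_{j\in N}S_j}(A)$ exists, then for every $i\in N$, $\sup_{S_i}(\pi_i(A))$ exists and equals $\pi_i(y)$; the statement for infima is dual. The upper-bound direction is immediate from the componentwise product order. For the least-upper-bound direction, I fix an arbitrary upper bound $u\in S_i$ of $\pi_i(A)$ and consider the point $z\in\prod_{j\in N}S_j$ defined by $z_j=y_j$ for $j\neq i$ and $z_i=y_i\wedge u$ (available since $S_i$ is a lattice); since $z$ remains an upper bound of $A$ in the product, minimality of $y$ forces $y\le z$, hence $y_i\le u$.

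With this formula in hand, for (1) I fix a nonempty $A_i\subset S_i$ and use surjectivity of the projection $S\to S_i$ to lift each $a\in A_i$ to some $s^a\in S$; subcompleteness of $S$ then gives that $y:=\sup_{\prod_{j\in N}S_j}(\{s^a:a\in A_i\})$ exists and lies in $S$, and the componentwise formula produces $\pi_i(y)=\sup_{S_i}(A_i)$, with infima handled symmetrically. For (2), given a nonempty $A_i\subset S_i(x_{-i})$, the natural lifts $(a,x_{-i})\in S$ are canonical, so subcompleteness of $S$ together with the componentwise formula identifies $\sup_{\prod_{j\in N}S_j}$ of this family as $(\sup_{S_i}(A_i),x_{-i})\in S$, exhibiting $\sup_{S_i}(A_i)\in S_i(x_{-i})$. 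For (3), for a nonempty $T\subset S(x)$, subcompleteness yields $s:=\sup_{\prod_{j\in N}S_j}(T)\in S$, which then coincides with $\sup_S(T)$; the componentwise formula reduces the claim $s\in S(x)$ to verifying $\pi_i(s)=\sup_{S_i}(\pi_i(T))\in S_i(x_{-i})$ for each $i$, and this is exactly what (2) delivers once applied to $\pi_i(T)\subset S_i(x_{-i})$.

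The only genuinely delicate step is the componentwise formula, where the coordinate-truncation trick $z_i=y_i\wedge u$ is what transfers the least-upper-bound property from the ambient product to the individual factor $S_i$; once this is isolated, parts (1)--(3) become essentially bookkeeping with surjectivity of the projections and with the canonical lifts $(a,x_{-i})$.
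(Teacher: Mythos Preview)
Your proposal is correct and follows essentially the same approach as the paper: both lift subsets of $S_i$ back into $S$, invoke subcompleteness of $S$ in the product to obtain suprema/infima there, and then verify that the $i$-th coordinate realizes the supremum in $S_i$. The only cosmetic differences are that you isolate the componentwise formula as a standalone step (via the truncation $z_i=y_i\wedge u$) whereas the paper embeds it in the proof of part~(1) using the direct substitution $(y_i,x_{-i})$, and the paper takes the full preimage of $T_i$ rather than choosing individual lifts $s^a$.
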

\begin{proof}\hfill
	\begin{enumerate}\item Let $T_i$ be a nonempty subset of $S_i$, and $T$ be the preimage of $T_i$ under the projection $S\to S_i$. Since $S\to S_i$ is surjective, the set $T$ is nonempty. Since $S\subset\prod_{j\in N}S_j$ is subcomplete, the element $x=\sup_{\prod_{j\in N}S_j}T\in S$. 
		
		We check that $x_i=\sup_{S_i}T_i$. For every $t_i\in T_i$, there is $t\in T$ whose $i^{\text{th}}$ coordinate is $t_i$. Then $x\ge t$, so $x_i\ge t_i$. Consider an arbitrary $y_i\in S_i$ with $y_i\ge \tau_i$ for all $\tau_i\in T_i$. For every $t\in T$, we have $(y_i,x_{-i})\ge t$. Thus, we have $(y_i,x_{-i})\ge x$ and $y_i\ge x_i$. 
		
		Similarly, the element $\inf_{S_i}T_i$ exists. Therefore, $S_i$ is complete.
		\item  For every nonempty subset $A\subset S_i(x_{-i})$ and every $a\in A$, one has $(a,x_{-i})\in S$. Since $S$ is subcomplete in $\prod_{i\in N}S_i$, by Part (\ref{it:Sicplt}), one has $(\sup_{S_i}A,x_{-i})=\sup_{\prod_{i\in N}S_i}\{(a,x_{-i})\}_{a\in A}\in S$, so $\sup_{S_i}A\in S_i(x_{-i})$. Similarly, one has $\inf_{S_i}A\in S_i(x_{-i})$. 
		
		\item For every nonempty subset $B\subset S(x)$ and  every $i\in N$, let $B_i$ be the image of $B$ under the projection $S\to S_i$. Then $B\subset S_i(x_{-i})$. By Part (\ref{it:Six-i}), one has $\sup_{S_i}B_i\in S_i(x_{-i})$. The sublattice $S$ is subcomplete in $\prod_{i\in N}S_i$, so $\sup_SB=(\sup_{S_i}B_i)_{i\in N}\in (\prod_{i\in N}S_i(x_{-i}))\cap S=S(x)$.\end{enumerate}
\end{proof}
\begin{df}\label{df:wkzhouequi}
	A strategy $s\in S$ is called a Nash equilibrium of the fixed supermodular game if for every $i\in N$ and every $s_i'\in S_i(s_{-i})$, one has $f_i(s'_i,s_{-i})\le f_i(s)$.
\end{df}
\begin{df}[{\cite[p.1260]{milgrom1990rationalizability}}, {\cite[Definition 1]{prokopovych2017strategic}}]\label{df:halforderusc}
	Given a \emph{complete} lattice $X$, a function $f:X\to \R$ is \emph{upward}(resp. \emph{downward}) \emph{upper semicontinuous} if for every nonempty chain $C$ in $X$,  one has \[\limsup_{x\in C,x\to\sup_X(C)}f(x)\le f(\sup_X(C))\]
	\[\text{(resp. }\limsup_{x\in C,x\to\inf_X(C)}f(x)\le f(\inf_X(C))).\]  If $f$ is both upward and downward upper semicontinuous, then it is called \emph{order upper semicontinuous}.
\end{df}
Theorem \ref{thm:introgame} follows from Theorem \ref{thm:order4.2.1} and Lemma \ref{lm:compareconditions}.
\begin{thm}\label{thm:order4.2.1}
	Suppose that  $S\subset \prod_{i\in N}S_i$  is subcomplete.	Assume that for every $i\in N$ and  every $x\in S$, 
	\begin{enumerate}
		\item\label{it:upusc}  the function $f_i(\cdot,x_{-i}):S_i(x_{-i})\to \R$ is upward  upper semicontinuous,
		\item\label{it:lowerbound} and for  every nonempty chain $C\subset S(x)$, there is $b\in S(x)$ such that $b\le c$  for every $c\in C$ and $f_i(b_i,x_{-i})\ge \limsup_{c\in C,c\to \inf_SC}f_i(c_i,x_{-i})$.
	\end{enumerate}
	Then, the set of Nash equilibria is a nonempty complete lattice.
\end{thm}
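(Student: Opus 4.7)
The plan is to verify the hypotheses of Theorem~\ref{thm:cpltval} for the joint best-response correspondence
\[F:S\to 2^S,\qquad F(x)=\Bigl\{y\in S(x):f_i(y_i,x_{-i})=\sup_{z\in S_i(x_{-i})}f_i(z,x_{-i})\text{ for all }i\in N\Bigr\},\]
whose fixed points coincide with the Nash equilibria by Definition~\ref{df:wkzhouequi}. Ascendingness of $F$ (in particular V-ascendingness) follows from Topkis's monotonicity theorem applied to each per-player best-response set $B_i(x_{-i}):=\argmax_{z\in S_i(x_{-i})}f_i(z,x_{-i})$: a direct computation using that $S$ is a sublattice of $\prod_{j\in N}S_j$ shows the constraint map $x_{-i}\mapsto S_i(x_{-i})$ is monotone in the strong set order, and supermodularity combined with increasing differences of $f_i$ then makes $B_i(\cdot)$ monotone in the strong set order; Lemma~\ref{lm:subcplt} transfers this to $F$.

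I then show that $F(x)$ is a nonempty complete lattice in three sub-steps. First, supermodularity gives the standard fact that each $B_i(x_{-i})$ is a sublattice of $S_i(x_{-i})$, which together with Lemma~\ref{lm:subcplt}(\ref{it:Sx}) makes $F(x)$ a sublattice of the subcomplete sublattice $S(x)$ of $S$. Second, hypothesis~(\ref{it:upusc}) forces chain-subcompleteness upward of $F(x)$ in $S$: for a chain $C\subset F(x)$, each coordinate projection is a chain in $B_i(x_{-i})$ on which $f_i(\cdot,x_{-i})$ is constantly equal to $M_i:=\sup f_i(\cdot,x_{-i})$, so upward upper semicontinuity yields $f_i((\sup_SC)_i,x_{-i})\ge M_i$ and hence $\sup_SC\in F(x)$; Lemma~\ref{lm:Veinott} then extends this chain-closure to arbitrary subsets, establishing join-completeness of $F(x)$. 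Third, I use hypothesis~(\ref{it:lowerbound}) and Zorn's lemma on $(F(x),\ge)$ to produce a least element: for every chain $C\subset F(x)$, (\ref{it:lowerbound}) provides, for each $i\in N$, an element $b^{(i)}\in S(x)$ with $b^{(i)}\le C$ and $f_i(b^{(i)}_i,x_{-i})\ge M_i$, hence $b^{(i)}_i\in B_i(x_{-i})$; these $b^{(i)}$'s should then be amalgamated into a single element of $F(x)$ bounding $C$ below, after which Zorn yields a minimal element that is a minimum by the sublattice structure. With join-completeness and a least element, Lemma~\ref{lm:jointmin} makes $F(x)$ a complete lattice, and Theorem~\ref{thm:cpltval} then concludes that $\Fix(F)$ is a nonempty complete lattice.

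The main obstacle is the amalgamation in the third sub-step. The natural candidate $b:=\bigvee_i b^{(i)}$ lies in $S(x)$ with $b\le C$ and is sandwiched between two maximizers via $b^{(i)}_i\le b_i\le c_i$ for every $c\in C$ and $i\in N$; but since $B_i(x_{-i})$ is only a sublattice and need not be sandwich-closed, $b_i\in B_i(x_{-i})$ does not follow automatically. Overcoming this will likely require a transfinite iterative use of hypothesis~(\ref{it:lowerbound})---for instance, successively enlarging $C$ with previously-built lower bounds and reapplying~(\ref{it:lowerbound}) for the remaining players---together with a careful interplay between supermodularity, increasing differences, and the sublattice structure of $S$.
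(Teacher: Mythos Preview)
Your approach diverges from the paper's at the choice of correspondence. You work with the per-player joint best response $F(x)=S(x)\cap\prod_iB_i(x_{-i})$, whereas the paper applies Theorem~\ref{thm:cpltval} to the \emph{aggregate} correspondence
\[Y(x)=\argmax_{y\in S(x)}g(y,x),\qquad g(y,x)=\sum_{i\in N}f_i(y_i,x_{-i}).\]
Both have the Nash equilibria as their fixed points, but $Y$ has far better behaved values. Because $g(\cdot,x)$ is a \emph{single} supermodular function on the complete lattice $S(x)$, each upper level set $L^a=\{y\in S(x):g(y,x)\ge a\}$ is a sublattice; hypothesis~(\ref{it:upusc}) makes it chain-subcomplete upwards, and one application of hypothesis~(\ref{it:lowerbound}) gives a lower bound $b\in L^a$ for any chain in $L^a$ (summing the coordinate inequalities for the one $b$). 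Nonemptiness and completeness of $Y(x)=L^{\max g}$ then come from general existence results for maximizers of supermodular functions, and increasing differences of $g$ (inherited from the $f_i$) yields ascendingness of $Y$.

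The amalgamation obstacle you flag is genuine for your $F$, and a transfinite iteration will not resolve it: each invocation of hypothesis~(\ref{it:lowerbound}) controls only the $i$-th coordinate of the new lower bound, and enlarging the chain does nothing to force the remaining coordinates into their $B_j(x_{-j})$, since argmax sets of supermodular functions need not be order-convex. Worse, even before Zorn can be invoked you need $F(x)\neq\emptyset$, i.e.\ that $\prod_iB_i(x_{-i})$ actually meets $S$; this is nowhere established and is not automatic when $S$ is a proper sublattice of the product. Passing to the aggregate $g$ is precisely the device that dissolves both difficulties at once: one function, one argmax, one lower bound per chain.
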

\begin{proof}
	Because $N$ is finite, 	 one may define a function \[g(\cdot,x):S(x)\to \R,\quad y\mapsto \sum_{i\in N}f_i(y_i,x_{-i})\]for every $x\in  S$. For every $a\in \R$, we prove that the subset $L^a:=\{y\in S(x):g(y,x)\ge a\}$ is chain-subcomplete upwards in $S(x)$  and chain-bounded below. For every nonempty chain $C\subset L^a$, by Lemma \ref{lm:subcplt} (\ref{it:Sx}), one has $\sup_SC\in S(x)$. Then
	\begin{align*}
		g(\sup_S(C),x)=&\sum_{i\in N}f_i((\sup_SC)_i,x_{-i})=\sum_{i\in N}f_i(\sup_{S_i}(C_i),x_{-i})\\
		\labelrel\ge{myeq:upusc} &\sum_{i\in N}\limsup_{y_i\in C_i,y_i\to \sup_{S_i}C_i}f_i(y_i,x_{-i})\\
		=&\sum_{i\in N}\limsup_{y\in C,y\to \sup_SC}f_i(y_i,x_{-i})\\
		\ge &\limsup_{y\in C,y\to \sup_SC} \sum_{i\in N}f_i(y_i,x_{-i})\\
		=&\limsup_{y\in C,y\to \sup_SC}g(y,x)\ge a,\end{align*} where \eqref{myeq:upusc} uses Assumption (\ref{it:upusc}). One gets $\sup_SC\in L^a$. Therefore, $L^a$ is chain-subcomplete upwards in $S(x)$. Take $b$ as in Assumption (\ref{it:lowerbound}). Then \begin{align*}g(b,a)=&\sum_{i\in N}f_i(b_i,x_{-i})\\
		\ge &\sum_{i\in N}\limsup_{c\in C,c\to \inf_SC}f_i(c_i,x_{-i})\\
		\ge &\limsup_{c\in C,c\to \inf_SC}\sum_{i\in N}f_i(c_i,x_{-i})\\
		=&\limsup_{c\in C,c\to \inf_SC}g(c,x)\ge a.\end{align*}  Thus, the chain $C$ has a lower bound $b\in L^a$. By Zorn's lemma, $L^a$ has a minimal element whenever it is nonempty. 
	
	We check that $g(\cdot,x):S(x)\to \R$ is supermodular. For every $y,y'\in S(x)$, $g(y\vee y',x)+g(y\wedge y',x)=\sum_{i\in N}f_i(y_i\vee y'_i,x_{-i})+f_i(y_i\wedge y'_i,x_{-i})\ge \sum_{i\in N}f_i(y_i,x_{-i})+f_i(y'_i,x_{-i})=g(y,x)+g(y',x)$.

	Then by \cite[Corollary 3.4 (3)]{yu2023topkis3normal}, the set $Y(x):=\argmax_{y\in S(x)}g(y,x)$ is nonempty. From last paragraph, the subset $Y(x)=L^{\max_{S(x)}g(\cdot,x) }$ is a chain-subcomplete upwards in $S(x)$  and chain-bounded below. 
	
	We check that $g$ has increasing difference. Consider  $y<y'\in S$, $x<x'\in S$, with $\{y,y'\}\times \{x,x'\}\subset \{(v,u)\in S\times S:v\in S(u)\}$. Since $\{y_i,y'_i\}\times \{x_{-i},x'_{-i}\}\subset S$, we have $f_i(y'_i,x_{-i})+f_i(y_i,x'_{-i})\le f_i(y_i,x_{-i})+f_i(y'_i,x'_{-i})$. Then $g(y',x)+g(y,x')=\sum_{i\in N}f_i(y'_i,x_{-i})+f_i(y_i,x'_{-i})\le \sum_{i\in N}f_i(y_i,x_{-i})+f_i(y'_i,x'_{-i})=g(y,x)+g(y',x')$. By \cite[Lemma 3.10 (2)]{yu2023topkis1} and \cite[Theorem 2.8.1]{topkis1998supermodularity}, the correspondence $Y:S\to 2^S$ is increasing. For every $x\in S$, the value $Y(x)$ is a lattice. By \cite[Corollary 2.7]{yu2023topkis2}, the lattice $Y(x)$ is complete.

	By Theorem \ref{thm:cpltval}, the poset $\Fix(Y)$ is a nonempty complete lattice. We conclude by \cite[Lemma 3.7]{yu2023topkis1}.
\end{proof}
\begin{rk}By Lemma \ref{lm:subcplt} (\ref{it:Six-i}), the upward upper semicontinuity in Assumption (\ref{it:upusc}) of Theorem \ref{thm:order4.2.1} makes sense.\end{rk}

\begin{lm}\label{lm:compareconditions}
	Suppose that  $S\subset \prod_{i\in N}S_i$  is subcomplete. If for every $i\in N$ and  every $x\in S$, the function $f_i(\cdot,x_{-i}):S_i(x_{-i})\to \R$ is downward upper semicontinuous, then Assumption (\ref{it:lowerbound}) of Theorem \ref{thm:order4.2.1} is satisfied.
\end{lm}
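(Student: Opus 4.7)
The natural candidate is $b := \inf_S C$. Since $S \subset \prod_i S_i$ is subcomplete and $C \subset S(x)$, Lemma \ref{lm:subcplt} (\ref{it:Sx}) gives $b \in S(x)$ and, as an infimum, $b \le c$ for every $c \in C$ automatically. It then remains to verify, for each fixed $i \in N$, the inequality
\[f_i(b_i, x_{-i}) \ge \limsup_{c \in C,\, c \to \inf_S C} f_i(c_i, x_{-i}).\]

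Because $S$ is subcomplete in the product lattice $\prod_j S_j$, infima in $S$ agree with coordinatewise infima, so $b_i = \inf_{S_i} C_i$, where $C_i$ denotes the image of the chain $C$ under the projection $S \to S_i$. As the projection is order-preserving, $C_i$ is a chain in $S_i$; moreover $C_i \subset S_i(x_{-i})$, and by Lemma \ref{lm:subcplt} (\ref{it:Six-i}) combined with (\ref{it:Sicplt}), $S_i(x_{-i})$ is a complete sublattice of $S_i$ containing $\inf_{S_i} C_i = b_i$. The hypothesized downward upper semicontinuity of $f_i(\cdot, x_{-i})$ therefore applies to the chain $C_i$ and yields
\[\limsup_{y_i \in C_i,\, y_i \to b_i} f_i(y_i, x_{-i}) \le f_i(b_i, x_{-i}).\]

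To finish, I would compare the two $\limsup$'s by unwinding them as $\inf_{c_0 \in C} \sup_{c \in C,\, c \le c_0} (\cdot)$ and $\inf_{y_0 \in C_i} \sup_{y_i \in C_i,\, y_i \le y_0}(\cdot)$ respectively. For any fixed $c_0 \in C$ and any $c \le c_0$ in $C$, one has $c_i \le (c_0)_i$, hence $\sup_{c \le c_0} f_i(c_i, x_{-i}) \le \sup_{y_i \le (c_0)_i,\, y_i \in C_i} f_i(y_i, x_{-i})$. Taking the infimum over $c_0 \in C$ on the left and noting that $\{(c_0)_i : c_0 \in C\} = C_i$ on the right gives
\[\limsup_{c \in C,\, c \to \inf_S C} f_i(c_i, x_{-i}) \le \limsup_{y_i \in C_i,\, y_i \to b_i} f_i(y_i, x_{-i}),\]
which chained with the previous inequality delivers the required bound.

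The only real subtlety is the last step: making the cofinality/projection comparison between the limsup along the chain $C \subset S$ and the limsup along its image $C_i \subset S_i$ precise enough to rule out any loss (for instance from different elements of $C$ sharing the same $i$-th coordinate). Everything else is a direct invocation of Lemma \ref{lm:subcplt} and the downward upper semicontinuity hypothesis coordinate by coordinate.
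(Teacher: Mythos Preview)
Your proof is correct and follows essentially the same route as the paper: choose $b=\inf_S C$, use Lemma~\ref{lm:subcplt} to place $b\in S(x)$ and identify $b_i=\inf_{S_i}C_i$, then apply downward upper semicontinuity on the projected chain $C_i$. The only difference is cosmetic: the paper simply asserts the equality $\limsup_{y_i\in C_i,\,y_i\to b_i}f_i(y_i,x_{-i})=\limsup_{c\in C,\,c\to b}f_i(c_i,x_{-i})$, whereas you spell out the needed inequality via the $\inf$--$\sup$ description---your caution about the cofinality subtlety is well placed but ultimately unnecessary, since the two $\limsup$'s in fact coincide.
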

\begin{proof}
	Let $b=\inf_SC$.  For every $i\in N$, let $C_i$ be the image of $C$ under the projection $S\to S_i$. Then   $C_i$ is a nonempty chain in $S_i(x_{-i})$ and $b_i=\inf_{S_i}C_i$. Since $f_i(\cdot,x_{-i}):S_i(x_{-i})\to \R$ is downward upper semicontinuous, one has\begin{align*}f_i(b_i,x_{-i})\ge&\limsup_{y_i\in C_i,y_i\to \inf_{S_i(x_{-i})}C_i}f_i(y_i,x_{-i})\\
		\labelrel={myeq:Six-i} &\limsup_{y_i\in C_i,y\to \inf_{S_i}C_i}f_i(y,x_{-i})\\
		=&\limsup_{c\in C,c\to b}f_i(c_i,x_{-i}),\end{align*} where \eqref{myeq:Six-i} uses Lemma \ref{lm:subcplt} (\ref{it:Six-i}). Since $C\subset S(x)$, by Lemma \ref{lm:subcplt} (\ref{it:Sx}), one has $b\in S(x)$.
\end{proof} 
Lemma \ref{lm:cpt>chainsub} is used in the proof of Corollary \ref{cor:Topkis421}.
\begin{lm}\label{lm:cpt>chainsub}
	Let $n\ge0$ be an integer. Then, every compact subset of $\R^n$ is chain-subcomplete.
\end{lm}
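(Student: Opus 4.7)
The plan is to fix a compact subset $K\subset\R^n$ and a nonempty chain $C\subset K$, and to show that $\sup_{\R^n}C$ and $\inf_{\R^n}C$ both exist and lie in $K$; by symmetry it suffices to treat the supremum. Since $K$ is bounded, the projection $C_i:=\{c_i:c\in C\}$ is a bounded subset of $\R$ for each coordinate $i$, so $s_i:=\sup_\R C_i$ exists, and the candidate supremum in $\R^n$ under the product order is $s=(s_1,\dots,s_n)$. A quick check confirms that $s=\sup_{\R^n}C$, since any upper bound of $C$ in $\R^n$ is coordinatewise an upper bound of each $C_i$.

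The heart of the argument is to exhibit a sequence in $C$ converging to $s$ in the Euclidean topology, after which closedness of $K$ forces $s\in K$. For each $i\in\{1,\dots,n\}$ and each $k\ge 1$ I would pick $c^{(i,k)}\in C$ with $c^{(i,k)}_i>s_i-1/k$ (possible by the definition of $s_i$ as a supremum in $\R$). Because $C$ is totally ordered under the product order, the finite subset $\{c^{(1,k)},\dots,c^{(n,k)}\}$ has a maximum, call it $c^{(k)}\in C$. Then $c^{(k)}\ge c^{(i,k)}$ componentwise for every $i$, yielding $s_i-1/k<c^{(k)}_i\le s_i$, and hence $c^{(k)}\to s$ in $\R^n$. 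Compactness of $K$ (which gives closedness) then implies $s\in K$.

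The infimum is handled by the dual construction: take a sequence $c^{(i,k)}\in C$ with $c^{(i,k)}_i<\inf C_i+1/k$, and use the total order on $C$ to replace them by their \emph{minimum} $c^{(k)}\in C$, which converges to $\inf_{\R^n}C$; closedness of $K$ again supplies membership.

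The only non-routine step is the passage from ``each coordinate approaches its supremum along \emph{some} element of $C$'' to ``one can arrange this for all coordinates simultaneously along a single sequence''. The trick resolving this is to exploit the total ordering of the chain: a finite collection of candidates in $C$ always has a common upper bound \emph{inside} $C$, which is what collapses $n$ separate coordinatewise approximations into one sequence in $C$. Without this observation one would be tempted to pass to a net or transfinite sequence, which is unnecessary here.
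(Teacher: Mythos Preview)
Your proof is correct and follows essentially the same route as the paper's: both define the candidate supremum coordinatewise using boundedness of $K$, exploit the total order on $C$ to replace $n$ separate coordinatewise approximants by their maximum in $C$, and then invoke closedness of $K$ to conclude that the limit lies in $K$. The only cosmetic difference is that the paper phrases the approximation with an $\epsilon$ and the $1$-norm, whereas you use a sequence indexed by $1/k$.
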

\begin{proof}
	Let $K$ be a compact subset of $\R^n$. Let $C$ be a nonempty chain in $K$. Because $K$ is bounded,   the number $y_i=\sup_{\R}\{x_i\}_{x\in C}$ exists for every $1\le i\le n$. Then $y=\sup_{\R^n}C$. 
	
	We prove that $y$ is a limit point of $C$. For every $\epsilon>0$ and every $1\le i\le n$, there is $x^i\in C$ with $y_i-x^i_i<\epsilon/n$. Since $C$ is a chain, there is $1\le j\le n$ with $x^j=\max_{k=1}^nx^k$. Then for every $1\le k\le n$, one has $0\le y_k-x^j_k\le y_k-x_k^k<\epsilon/n$. Thus, the $1$-norm $\Vert y-x^j\Vert_1:=\sum_{k=1}^n|y_k-x^j_k|<\epsilon$.
	
	Since $K$ is closed in $\R^n$, one has $y\in K$. Similarly, the element $\inf_{\R^n}C$ exists in $K$.
\end{proof}Lemma \ref{lm:Euclidusc} (\ref{it:usc>orderusc}) shows that the topological upper semicontinuity is stronger than the order counterpart. An analog for upper semicontinuity in the interval topology is given in \cite[Proposition 3.7]{yu2023topkis2}. As the Euclidean topology on $X$ is finer than the interval topology, the upper semicontinuity in Lemma \ref{lm:Euclidusc} is weaker than that in \cite[Proposition 3.7]{yu2023topkis2}.
\begin{lm}\label{lm:Euclidusc}Let $n\ge0$ be an integer, and let $X$ be a compact sublattice of $\R^n$. Let $f:X\to \R$ be an upper semicontinuous function. Then
	\begin{enumerate}
		\item\label{it:cpt>cplt} the sublattice $X$ is subcomplete in $\R^n$, and
		\item\label{it:usc>orderusc}  every upper  semicontinuous function $f:X\to \R$  is order upper semicontinuous.
	\end{enumerate} 
\end{lm}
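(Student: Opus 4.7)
The plan is to prove the two parts in order, with Part (\ref{it:usc>orderusc}) reducing to a compatibility between the chain's order-limit and its Euclidean limit.

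For Part (\ref{it:cpt>cplt}), I would fix a nonempty subset $A\subset X$ and show that $b:=\sup_{\R^n}(A)$, which exists coordinate-wise because $X$ is bounded, belongs to $X$. The idea is to approximate $b$ by elements of $X$: for each $\epsilon>0$ and each coordinate $i\in\{1,\dots,n\}$, pick $a^{(i)}\in A$ with $b_i-a^{(i)}_i<\epsilon$, and form $c_\epsilon:=a^{(1)}\vee\cdots\vee a^{(n)}$, which lies in $X$ because $X$ is a sublattice. Then $a^{(i)}_i\le (c_\epsilon)_i\le b_i$ forces $\|b-c_\epsilon\|_\infty<\epsilon$. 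Letting $\epsilon\to0$ and using that $X$ is closed (being compact) gives $b\in X$; the infimum is symmetric.

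For Part (\ref{it:usc>orderusc}), let $C$ be a nonempty chain in $X$ and set $y:=\sup_X(C)$, which equals $\sup_{\R^n}(C)$ by Part (\ref{it:cpt>cplt}). The key step, essentially what the proof of Lemma \ref{lm:cpt>chainsub} already establishes, is that $C$ is eventually contained in every Euclidean neighborhood of $y$: for $\epsilon>0$, pick $c^{(i)}\in C$ with $y_i-c^{(i)}_i<\epsilon$ for each $i$ and let $d=\max\{c^{(1)},\dots,c^{(n)}\}\in C$ (a maximum because $C$ is a chain); then any $x\in C$ with $x\ge d$ satisfies $d_i\le x_i\le y_i$ coordinate-wise, hence $\|y-x\|_\infty<\epsilon$. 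Given topological upper semicontinuity of $f$ at $y$, for each $\epsilon>0$ there is a Euclidean neighborhood $U$ of $y$ with $f(x)<f(y)+\epsilon$ on $U\cap X$; the chain eventually enters $U$, so $\sup_{x\in C,\,x\ge d}f(x)\le f(y)+\epsilon$ for a suitable $d\in C$, and sending $\epsilon\to0$ yields the upward half of Definition \ref{df:halforderusc}. The downward half is symmetric.

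The only subtle point I anticipate is matching the order-theoretic $\limsup$ of Definition \ref{df:halforderusc} with the Euclidean neighborhoods used by topological upper semicontinuity; once the Lemma \ref{lm:cpt>chainsub}-style coordinate-wise approximation is in place, both parts are routine.
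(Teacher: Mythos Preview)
Your argument is correct, but it proceeds quite differently from the paper. For Part~(\ref{it:cpt>cplt}) the paper does not approximate $\sup_{\R^n}(A)$ directly; instead it invokes Lemma~\ref{lm:cpt>chainsub} to get that $X$ is chain-subcomplete in $\R^n$, and then cites Veinott's theorem (that a chain-subcomplete sublattice is subcomplete) to finish. For Part~(\ref{it:usc>orderusc}) the paper works with level sets: each $L^a=\{x\in X:f(x)\ge a\}$ is closed in the compact set $X$, hence compact, hence chain-subcomplete in $\R^n$ by Lemma~\ref{lm:cpt>chainsub}; an external characterization of order upper semicontinuity via chain-subcompleteness of level sets then yields the conclusion. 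Your approach is more self-contained---it avoids both Veinott's theorem and the level-set characterization by verifying everything from the definitions via coordinate-wise approximation---whereas the paper's approach is more modular, reusing Lemma~\ref{lm:cpt>chainsub} twice and delegating the heavy lifting to cited results. Both are short; yours would be preferable in a setting where those external references are not already available.
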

\begin{proof}\hfill\begin{enumerate}\item By Lemma \ref{lm:cpt>chainsub} and Veinott's theorem \cite[Theorem 2.6]{yu2023topkis2}, the lattice $X$ is complete.
		\item For every $a\in \R$, the subset $L^a:=\{x\in X:f(x)\ge a\}$ is closed in $X$. Since $X$ is compact, so is $L^a$. By Lemma \ref{lm:cpt>chainsub}, $L^a$ is chain-subcomplete in $\R^n$. By \cite[Proposition 3.4]{yu2023topkis2}, $f$ is order upper semicontinuous. \end{enumerate}
\end{proof}
As an application of Theorem \ref{thm:introgame}, we  deduce Topkis's theorem \cite[Theorem 4.2.1]{topkis1998supermodularity}, which requires topological assumptions.
\begin{cor}[Topkis]\label{cor:Topkis421}
	Assume the following conditions.
	\begin{enumerate}\item For every $i\in N$, there is an integer $m_i\ge0$ such that $S_i$ is a sublattice of $\R^{m_i}$.
		\item The subset $S$ is compact in $\R^{\sum_{i=1}^Nm_i}$.
		\item For every $i\in N$ and every $x_{-i}\in S_{-i}$, the function $f_i(\cdot, x_{-i}):S_i(x_{-i})\to \R$ is upper semicontinuous.\end{enumerate} Then, the set of Nash equilibria is a nonempty complete lattice.
\end{cor}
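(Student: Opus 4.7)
The plan is to reduce Corollary \ref{cor:Topkis421} to Theorem \ref{thm:introgame}. The path from the latter to its conclusion runs through Theorem \ref{thm:order4.2.1} and Lemma \ref{lm:compareconditions}, which between them require (a) subcompleteness of $S$ in $\prod_{i\in N} S_i$ and (b) order upper semicontinuity of each section $f_i(\cdot,x_{-i})$ on $S_i(x_{-i})$. Both will be extracted from the Euclidean topological hypotheses via Lemma \ref{lm:Euclidusc}.

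For (a): since each $S_i$ is a sublattice of $\R^{m_i}$, meets and joins in $\prod_{i\in N} S_i$ are coordinatewise and coincide with the corresponding operations in $\R^{\sum_i m_i}$. Therefore $S$, being a sublattice of $\prod_{i\in N} S_i$, is a sublattice of $\R^{\sum_i m_i}$ as well. Compactness of $S$ together with Lemma \ref{lm:Euclidusc} (\ref{it:cpt>cplt}) yields subcompleteness of $S$ in $\R^{\sum_i m_i}$, which, by the same coincidence of operations, transfers to subcompleteness of $S$ in $\prod_{i\in N} S_i$. This is exactly what Lemma \ref{lm:subcplt} presupposes.

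For (b): fix $i\in N$ and $x_{-i}\in S_{-i}$. Via the homeomorphism $x_i\mapsto (x_i,x_{-i})$, the set $S_i(x_{-i})$ identifies with the closed slice $S\cap(\R^{m_i}\times\{x_{-i}\})$ of the compact set $S$, and is therefore compact in $\R^{m_i}$. Lemma \ref{lm:subcplt} (\ref{it:Six-i}) makes it a subcomplete sublattice of $S_i$, hence a sublattice of $\R^{m_i}$. Hypothesis (3) of the corollary supplies topological upper semicontinuity of $f_i(\cdot,x_{-i})$ on this compact sublattice, and Lemma \ref{lm:Euclidusc} (\ref{it:usc>orderusc}) then upgrades it to order upper semicontinuity. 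With (a) and (b) both in place, Theorem \ref{thm:introgame} delivers the conclusion. The only mildly subtle point, and the one requiring care, is that subcompleteness must be asserted in the ambient lattice $\prod_{i\in N} S_i$ rather than merely in $\R^{\sum_i m_i}$; the coincidence of coordinatewise operations resolves this painlessly.
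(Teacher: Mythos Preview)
Your proof is correct and follows the same route as the paper: verify subcompleteness of $S$ in $\prod_{i\in N}S_i$ and order upper semicontinuity of each section $f_i(\cdot,x_{-i})$, then invoke Theorem \ref{thm:introgame}. The only difference is cosmetic: for subcompleteness the paper cites \cite[Theorem 2.3.1]{topkis1998supermodularity} directly, whereas you use Lemma \ref{lm:Euclidusc} (\ref{it:cpt>cplt}) together with the observation that sups and infs in $\prod_{i\in N}S_i$ coincide with those in $\R^{\sum_i m_i}$; and you make explicit the compactness of the slice $S_i(x_{-i})$ before applying Lemma \ref{lm:Euclidusc} (\ref{it:usc>orderusc}), which the paper leaves implicit.
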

\begin{proof}
	From \cite[Theorem 2.3.1]{topkis1998supermodularity}, the sublattice $S\subset \prod_{i\in N}S_i$  is subcomplete. By Lemma \ref{lm:Euclidusc} (\ref{it:usc>orderusc}), for every  $x_{-i}\in S_{-i}$, the function $f_i(\cdot, x_{-i}):S_i(x_{-i})\to \R$ is order upper semicontinuous. By Theorem \ref{thm:introgame}, the set of Nash equilibria is a nonempty complete lattice.
\end{proof}
\section{Conclusion}
We use our generalization of Zhou's fixed point theorem to investigate the Nash equilibrium of supermodular games. We prove the existence of Nash equilibria and show that the set of Nash equilibria is a complete lattice. These result extend the celebrated Topkis  theorem. Compared with Topkis's theorem, our result does not need any topological condition, thereby enhancing the applicability and relevance of the results to various contexts.

	\section*{Statements and Declarations}

\subsection*{Conflict of interests}
The author declares that she has no conflict of interests.
\subsection*{Ethical approval}
This article does not contain any studies with human participants or animals performed by the author.
\subsection*{Informed consent} For this type of study informed consent was not required.

\subsection*{Acknowledgments}
I express gratitude to my  supervisor Philippe Bich for his constructive feedback and invaluable support. I am grateful to the referees for their comments and  suggestions.
\printbibliography
\end{document}